\newcommand{\bi}{\begin{itemize}}
\newcommand{\ei}{\end{itemize}}
\newtheorem{theorem}{Theorem}[section]
\newcommand{\bC}{\mathbb{C}}
\newcommand{\bR}{\mathbb{R}}
\newcommand{\sP}{\mathcal{P}}
\title[Decomposing the parameter space of biological networks]{Decomposing the parameter space of biological networks\\ via a numerical discriminant approach}
\author{
Heather A. Harrington, Dhagash Mehta,\\
Helen M. Byrne, Jonathan D. Hauenstein
}
\address{Mathematical Institute, The University of Oxford, Oxford, UK \newline\indent
Department of Applied and Computational Mathematics and Statistics, University of Notre Dame, Notre Dame, IN, USA.}
\begin{document}


\begin{abstract}
Many systems in biology, physics and engineering can be described by systems of ordinary differential equation containing many parameters. When studying the dynamic behavior of these large, nonlinear systems, it is useful to identify and characterize the steady-state solutions as the model parameters vary, a technically challenging problem in a high-dimensional parameter landscape. Rather than simply determining the number and stability of steady-states at distinct points in parameter space, we decompose the parameter space into finitely many regions, the steady-state solutions being consistent within each distinct region.  From a computational algebraic viewpoint, the boundary of these regions 
is contained in the discriminant locus.  
We develop global and local numerical algorithms for 
constructing the discriminant locus and  classifying the parameter landscape. 
We showcase our numerical approaches by applying them to  molecular and cell-network models.\\
keywords: parameter landscape, numerical algebraic geometry, discriminant locus,  dynamical systems, cellular networks
 \end{abstract}

\maketitle
\section{Introduction}
The dynamic behavior of many biophysical systems can be mathematically modeled 
with systems of differential equations that describe 
how the state variables interact and evolve over time.  The differential equations 
typically include parameters that represent physical processes such as kinetic rate constants, the strength of cell-cell interactions, and external stimuli, and the behavior of the state variables
may change as the parameters vary.  Typically, determining and classifying all steady state
solutions of such nonlinear systems, as a function of the parameters, is a difficult problem. However, when the equations
are polynomial, or can be translated into polynomials, which is the case for many biological, physics and engineering systems,
computing the steady state solutions 
becomes a problem in computational algebraic geometry.  With this setup, it is possible to compute the regions of the parameter space that give rise to different numbers of steady-state solutions. Here we present an approach 
based on the numerical algebraic geometric paradigm 
of computing points rather than defining equations and inequalities, surpassing current symbolic approaches, with a theoretical guarantee of finding all steady-states.

Due to the ubiquity of such problems, many methods have been proposed
for identifying and characterizing steady state solutions 
over a parameter space.  Some examples include using 
a cylindrical algebraic decomposition \cite{Collins} with
related variants \cite{LR07,xia2007discoverer,brown2003qepcad} and
computing the ideal of the discriminant locus using resultants or Gr\"obner basis methods, 
(see \cite{CLO05,GKZ,Sturmfels}).
Unfortunately, each of these methods
has drawbacks that prevent them from being applicable to models 
for real-world applications due to their algorithmic 
complexity, symbolic expression swell, and inherent sequential
nature.  On the other hand, numerical continuation methods as implemented in, for example,
AUTO \cite{doedel1981auto} and MATCONT \cite{dhooge2003matcont}, 
compute bifurcations for a parametric model and
are local in that one ``continues'' from a given initial point.
Recently, it has become possible to compute all solutions 
over the complex numbers $\bC$ to a system of polynomial equations 
using homotopy continuation and, 
more generally, numerical algebraic geometry,
(see \cite{bates2013numerically,SWbook,wampler1990numerical}).
%
%
Such methods have been implemented in software packages
including Bertini \cite{bates2008software}, 
HOM4PS-3 \cite{HOM4PS3}, and PHCpack \cite{verschelde1999algorithm}
with Paramtopy~\cite{Paramotopy} extending Bertini to 
study the solutions at many points in parameter space.
Typically, these methods work over $\bC$ while the solutions
of interest in biological models are in a subset of the real numbers $\bR$,
e.g., one is interested in steady-states in the positive orthant where the variables are~biologically~meaningful.

We present a {\em numerical discriminant locus} method for decomposing parameter space into distinct solution regions. Recall that when solving the roots of the equation $ax^2+bx+c=0$, where $a$, $b$, and $c$ are parameters, and $x$ is the variable, 
the discriminant locus defined by $\Delta := b^2-4ac = 0$ is the 
boundary separating regions in which the two distinct solutions for $x$ are 
real ($\Delta>0$) and nonreal ($\Delta<0$).  We propose three methods
for decomposing the parameter space.  The first, aimed at one-dimensional parameter
spaces, is a sweeping approach that will locate the finitely-many
real points in the discriminant locus building upon~\cite{3Dtumor,PV10}
thereby yielding finitely many regions, which are open intervals in this case, 
where the number of steady-state solutions
is consistent.  The second, for low-dimensional parameter spaces, 
provides a complete decomposition of the parameter space into finitely many regions
after decomposing the discriminant locus.  
Since computing and decomposing the discriminant locus may be impractical
for high-dimensional parameter spaces, our third method 
uses the sweeping approach to compute a local decomposition of the 
parameter space near a given point in the parameter space.  
When decomposing a high-dimensional parameter space is desirable, one could bootstrap together the local analyses to generate a more complete, or global, view of the parameter space.

In the next section, we introduce the discriminant locus and present
the proposed algorithms. Then we apply our algorithms to two biological models.

\section{Discriminant Locus and the Algorithms}
\label{sec:disc_and_alg}

We consider autonomous systems of differential equations of the form
\begin{equation} 
\frac{d}{dt} \boldsymbol{x} = \boldsymbol{f}(\boldsymbol{x},\boldsymbol{p}) \label{eq:ode} 
\end{equation}
where $ \boldsymbol{x} = (x_1,\dots,x_N)$ 
is the collection of state variables,
$\boldsymbol{p} = (p_1,\dots,p_s)$ is the collection of system parameters,
and $ \boldsymbol{f}(\boldsymbol{x},\boldsymbol{p})$
is a system of $N$ polynomial or rational functions.
For $\boldsymbol{p}\in\bR^s$, the steady state solutions to~Eq.~\ref{eq:ode} 
are defined as $ \boldsymbol{x}\in\bR^N$ such that $\boldsymbol{f}(\boldsymbol{x},\boldsymbol{p}) = 0$.  
We are particularly interested in the case when 
$\boldsymbol{f}(\boldsymbol{x},\boldsymbol{p}) = 0$ 
has finitely many solutions in $\bC^N$, all of which are nonsingular, 
for generic~$\boldsymbol{p}$ as this is typically the case for biological networks.

The parameter space $\sP\subset\bR^s$ for Eq.~\ref{eq:ode} 
consists of the values of the parameters $\boldsymbol{p}$ which one
aims to consider, here those that are biologically meaningful.
For simplicity, one should consider $\sP = \bR^s$ 
or the positive orthant in $\bR^s$.  
The quantitative behavior of the steady state solutions 
is constant on regions inside the parameter space $\sP$,
e.g., the number of physically realistic steady state solutions is the same for all parameter values in the region.
One can also refine the quantitative behavior, by considering, for example, the number of positive steady state solutions that are locally stable.
The boundaries of these regions are contained in the {\em discriminant~locus}.

Classically, the discriminant locus consists of the parameter
values for which nongeneric behavior occurs.  
Suppose that $\boldsymbol{p}\in\sP$ is such that $\boldsymbol{f}(\boldsymbol{x},\boldsymbol{p}) = 0$ has the generic behavior,
that is, $\boldsymbol{f}(\boldsymbol{x},\boldsymbol{p}) = 0$ has the expected number of nonsingular isolated solutions.
By the implicit function theorem, this generic behavior extends to an
open neighborhood containing~$\boldsymbol{p}$.  One can keep increasing the size
of this neighborhood in the parameter space until it touches the discriminant locus.
When focused on counting the number of steady state solutions, 
the discriminant locus is defined by
the closure of all values of $\boldsymbol{p}\in\sP$ such that there 
exists~$\boldsymbol{x}$ with $\boldsymbol{f}(\boldsymbol{x},\boldsymbol{p}) = 0$
and $J_{\boldsymbol{x}}\boldsymbol{f}(\boldsymbol{x},\boldsymbol{p})$ is not invertible, where $J_{\boldsymbol{x}}\boldsymbol{f}$ is the Jacobian matrix of $\boldsymbol{f}$
with respect to the state variables.

One can add to the classical discriminant locus other values of the parameters
based on the quantitative property of interest,
which we collectively call the {\em discriminant locus} for the problem.
For example, the number of positive steady state solutions
occurs at either the classical discriminant locus or the closure
of the set of $\boldsymbol{p}\in\sP$ such that there exists $\boldsymbol{x}$ with 
$\boldsymbol{f}(\boldsymbol{x},\boldsymbol{p}) = 0$ and $x_i = 0$
for some $i\in\{1,\dots,N\}$.  If a function $f_i$ is rational, 
one also has to consider the closure of the set of $\boldsymbol{p}\in\sP$ where
the numerators of $\boldsymbol{f}(\boldsymbol{x},\boldsymbol{p})$ are zero and the denominator of $f_i$ is zero.  
Since the discriminant locus is contained in a hypersurface in $\sP$,
the parameter set $\sP$ with the discriminant locus removed consists of regions 
for which the quantitative behavior of the solution set is constant.

\subsection{Algorithm and Implementation}

Our approach builds on methods in numerical algebraic geometry, e.g.,
see \cite{bates2013numerically,SWbook,wampler1990numerical} for a general overview.
In particular, rather than aiming to find equations that vanish
on the discriminant locus and inequalities that describe 
the regions in parameter space, we search for points lying on the discriminant
locus and describe regions based on giving a point lying in the region.  
That is, since each region is connected, one can start with one point 
in a region and trace out the boundaries, which are contained in the discriminant locus, for that region.  In this way, rather than finding explicit
equations and inequalities that define the boundaries of a region,
one starts with a point in the region and has a method that traces
out the boundaries as needed.  This paradigm is similar to the numerical ``cell decomposition'' approach for real 
curves \cite{RealCurves} and surfaces \cite{RealSurfaces1,RealSurfaces2},
rather than a symbolic approach where the boundaries
are described by polynomials \cite{Collins,LR07} (see \cite{niu2008algebraic,hanan2010stability,hernandez2011towards,GHRS15} for applications to biology).

The first approach, which we call {\em perturbed sweeping},
is for one-dimensional parameter spaces, i.e., $s = 1$.  
In this case, the discriminant locus for the problem of interest
consists of at most finitely many points.  Sweeping refers to tracking
the solution curves $\boldsymbol{x}(p)$ where $\boldsymbol{f}(\boldsymbol{x}(p),p) = 0$ as $p\in\bR$ varies via
continuation.  To avoid numerical issues with attempting to track through the
discriminant locus, we propose to sweep along a perturbed path.  
For $i = \sqrt{-1}$ and $\epsilon\in\bR$, we consider the perturbed solution curves
$\boldsymbol{x}_\epsilon(p)$ defined by $\boldsymbol{f}(\boldsymbol{x}_\epsilon(p),p+\epsilon i) = 0$.

\begin{theorem}\label{thm:perturbed}
With the setup described above, for all but finitely many $\epsilon\in\bR$, 
all perturbed solution curves are smooth.
\end{theorem}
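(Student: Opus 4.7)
The plan is to complexify the parameter, show that the classical (complex) discriminant locus consists of only finitely many points of $\bC$, and then observe that at most one horizontal perturbation level $\epsilon = \operatorname{Im}(q)$ can hit any given such point.

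First I would form the incidence variety
\[
V \;=\; \{(\boldsymbol{x},p) \in \bC^N \times \bC : \boldsymbol{f}(\boldsymbol{x},p) = 0,\; \det J_{\boldsymbol{x}}\boldsymbol{f}(\boldsymbol{x},p) = 0\}
\]
(after clearing denominators if $\boldsymbol{f}$ is rational) and let $\pi : V \to \bC$ denote projection onto the parameter. The standing hypothesis from Eq.~\ref{eq:ode} that $\boldsymbol{f}(\boldsymbol{x},p)=0$ has only nonsingular isolated solutions for generic $p$ provides some $p_0 \in \bC$ with $\pi^{-1}(p_0) = \emptyset$, so the Zariski closure $\Delta := \overline{\pi(V)}$ is a proper algebraic subset of the one-dimensional affine line $\bC$. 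Thus $\Delta$ is finite: $\Delta = \{q_1,\dots,q_k\}$. (In the rational case, one adjoins the finitely many additional points of $\bC$ at which a denominator of some $f_i$ vanishes on a solution branch; the argument is unchanged.)

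Second, for each real $\epsilon$, the perturbed path parametrizes the horizontal line $L_\epsilon = \{z \in \bC : \operatorname{Im}(z) = \epsilon\}$. Since $L_\epsilon \cap \Delta \neq \emptyset$ forces $\epsilon = \operatorname{Im}(q_j)$ for some $j$, there are at most $k$ ``bad'' values of $\epsilon$. For any other $\epsilon$, the Jacobian $J_{\boldsymbol{x}}\boldsymbol{f}(\boldsymbol{x}_\epsilon(p),\,p+\epsilon i)$ is invertible for every $p \in \bR$, so the implicit function theorem applied at each such $p$ produces a unique smooth branch through that point; distinct branches cannot collide, since a collision would again force the Jacobian to degenerate and thus would place $p+\epsilon i$ in $\Delta$.

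The only nontrivial step, and hence the main obstacle, is justifying the finiteness of $\Delta$. This rests on two standard facts from algebraic geometry: (i) the image of an algebraic set under a projection is constructible, so its closure is algebraic; and (ii) the genericity assumption guarantees that this closure is a proper subset of $\bC$, which in a one-dimensional ambient variety forces it to be zero-dimensional and hence finite. Everything else is a direct application of the implicit function theorem.
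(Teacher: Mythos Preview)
Your argument is correct and follows essentially the same route as the paper: both proofs reduce to the observation that the (complex) discriminant locus is a finite subset $\{q_1,\dots,q_k\}\subset\bC$, whence only the finitely many values $\epsilon\in\{\operatorname{Im}(q_1),\dots,\operatorname{Im}(q_k)\}$ can produce a singular perturbed path. The paper simply asserts this finiteness from the one-dimensional setup, while you supply the justification via constructibility of images and the genericity hypothesis; this is extra detail rather than a different approach.
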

\begin{proof}
Since there are only finitely many points in the discriminant locus, 
there can be only finitely many values of $\epsilon\in\bR$ such that there 
exists $\delta\in\bR$ with $\delta + \epsilon i$ in the discriminant locus.
\end{proof}

Since $\boldsymbol{x}_\epsilon(p)\rightarrow \boldsymbol{x}(p)$ as $\epsilon\rightarrow0$, we are able
to recover information about the actual solution curves with the distinct
numerical advantage of tracking {\em smooth} solution curves. 
Importantly, our sweeping approach evades possible numerical issues from \cite{PV10} 
which arise as the number of state variables increase 
and are associated with monitoring the determinant of the Jacobian matrix with respect to the state variables.  That is, the determinant of
the Jacobian matrix can be ``close'' to zero for matrices which are 
``far'' from being rank deficient.  For example, 
consider the matrix $A_n = 2^{-1}\cdot I_n$,
where $I_n$ is the $n\times n$ identity matrix.
For any $n\geq1$, $A_n$ is half a unit away from the nearest singular 
matrix while $\det(A_n) = 2^{-n}$.  To avoid this situation,
one monitors the condition number as in \cite{3Dtumor}; moreover 
we utilize a perturbation to regularize the sweeping path.
Thus, rather than track directly along the real line  
and possibly pass through a parameter point lying exactly on the 
discriminant locus where the Jacobian matrix is rank deficient,
we track along a nearby path where the Jacobian matrix has full rank.  
This avoids possible issues associated with tracking 
where the Jacobian matrix is rank deficient, but still permits
the location of the singularities by monitoring the condition number.
If further refinement is needed, additional efficient
local computations can be employed, e.g., \cite{GS85,GH15}.

We now build on this perturbed sweeping approach to generate methods
for parameter spaces which are not one-dimensional.  
Our next approach, applicable for low-dimensional parameter spaces, 
is a global approach called a {\em global region decomposition}
that mixes projections, critical sets,
and perturbed sweeping.  We start in the two-dimensional case, $s = 2$,
for which the discriminant locus is contained in a curve.  
For a (sufficiently) random projection $\pi(\boldsymbol{p}) = \alpha_1 p_1 + \alpha_2 p_2$, 
consider intersecting the discriminant locus with the family of lines
defined by $\pi(\boldsymbol{p}) = t$.  Using the perturbed sweeping approach 
on the added parameter~$t$ along the discriminant locus
yields the critical points of the discriminant locus with respect to $\pi$.
The discriminant locus has the same quantitative behavior between
its critical points and thus the regions can be constructed by slicing
between the critical points and at each of the critical points, following
a modification of \cite{RealCurves}.

For a global region decomposition for higher dimensional parameter spaces
one reduces the number of dimension by considering the critical point set. 
For example, in the three-dimensional case, $s = 3$, 
and $\pi(\boldsymbol{p})$ consisting to two (sufficiently) random projections,
one first considers the so-called critical curve of the discriminant locus
with respect to $\pi$, namely the set of points on the discriminant locus
where the Jacobian is singular with respect to $\pi$.  
One then decomposes the curve case as above yielding a region decomposition
in $\pi(\sP)$ which one then lifts to a region decomposition for $\sP$.

Since a global region decomposition is not practical for high-dimensional 
parameter spaces, we propose a local region decomposition method 
by combining perturbed sweeping with the classical approach 
of ray tracing.  Given a point $\boldsymbol{p}\in\sP$ not contained in the discriminant locus,
one can view the codimension one components of the discriminant locus by
using the perturbed sweeping approach along emanating from $\boldsymbol{p}$.  
Once points on the discriminant locus are found, one can use homotopy
continuation methods to track along the boundaries and locate other regions.
This method is local in the sense that small regions could be missed
but avoids the expense of computing critical points of projections
in the global approach.  By bootstrapping local decompositions from 
various $\boldsymbol{p}$, one can aim towards generating a 
global view of the parameter space if desired.

\begin{figure}[htb!]
\includegraphics{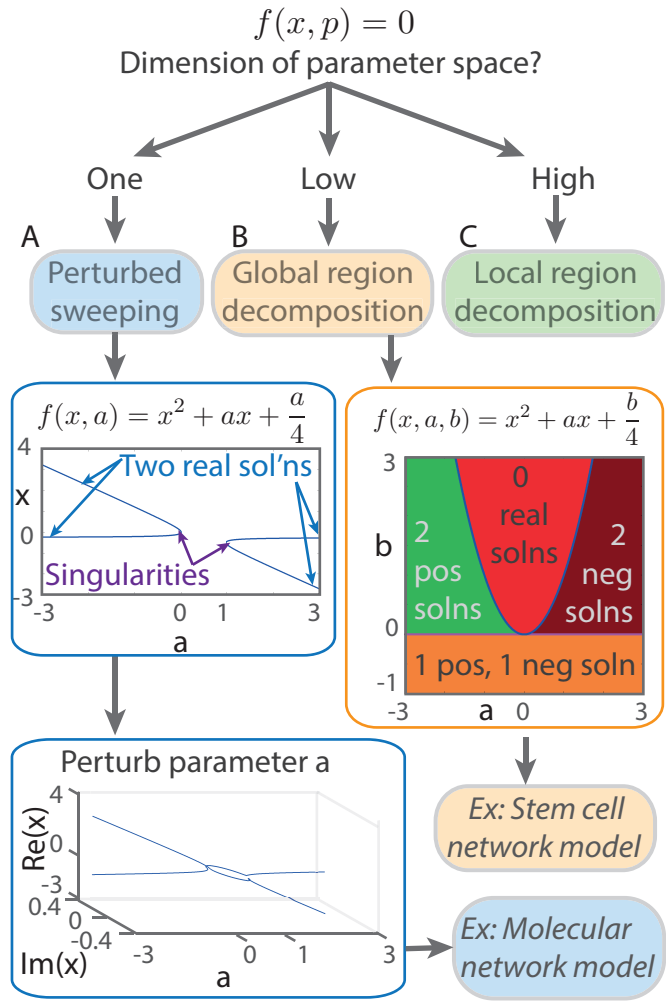}
\caption{
Flow chart of methods. {\bf(A)} Linear perturbation. {\bf(B)} Global method for decomposing $(a,b)$ parameter space into regions of number of real steady-states. {\bf(C)} Local method for high dimensional parameter space analysis.
}
\label{fig-1}
\end{figure}

\subsection{Quadratic Example}

To illustrate the perturbed sweeping and global region decomposition approach, 
we consider two examples of a parameterized quadratic equation.
The first has one parameter, namely $f(x,a) = x^2 + ax + a/4$.  
The classical discriminant for quadratic polynomials
yields $\Delta = a^2 - a$ with discriminant locus
$a = 0,1$.  In particular, $f = 0$ has 
a singular real solution when $a = 0$ or $a = 1$, 
two distinct real solutions when $a < 0$ or $a > 1$, 
and two distinct nonreal solutions when $0 < a < 1$.  
To avoid tracking through the singularities, 
we use the perturbed sweeping method 
using $f(x,a+\epsilon i) = 0$.  
In this case, for any fixed nonzero $\epsilon\in\bR$ and
$a\in\bR$, $f(x,a+\epsilon i) = 0$ always has 2 distinct solutions.
By taking $\epsilon$ near zero, say $\epsilon = 10^{-6}$, 
we sweep along the smooth curve parameterized by $a$ and 
observe the expected solution behavior as shown in Figure~\ref{fig-1}.

The second example has two parameters, namely $f(x,a,b) = x^2 + ax + b/4$
and we aim to decompose the parameter space based on the number of real
and positive solutions, which is typical in biological problems.  
The classical discriminant for quadratic polynomials
yields $\Delta = a^2 - b$ with the sign condition adding to this
via the equation $f(0,a,b) = b/4 = 0$, i.e., $b = 0$.  
Thus, the discriminant locus in this case consists of two irreducible curves
which cut the parameter space $(a,b)\in\bR^2$ into four regions 
where the number of real, positive, and negative
solutions are constant on these regions as shown in Figure~\ref{fig-1}.  

By using the projection $\pi(a,b) = a$, the perturbed sweeping method
finds the critical point $a = 0$ of the discriminant locus.
For any $a < 0$ or $a > 0$, there are three regions in $b$, namely
$b < 0$, $0 < b < a^2$, and $b > a^2$.  For $a = 0$, there are two regions
in $b$, namely $b < 0$ and $b > 0$ which must extend away from $a = 0$ by the implicit
function theorem.  Starting from one point in each of these regions, continuation 
is used to merge these $7$ regions into $4$ distinct regions of $\bR^2$.

\section{Results}\label{sec:results}

We showcase our methods by applying them to two illustrative biological models: a detailed ODE model of the gene and protein signaling network that induces long-term memory proposed by Pettigrew {\em et al.}~\cite{Pettigrew:2005ba}, and a network model of cell fate specification in a population of interacting stem cells. Since cellular decision making often depends on the number of accessible (stable) steady-states that a system exhibits, we seek to identify distinct regions of parameter space that can elicit different system behavior.

\subsection{Molecular network model}

A gene and protein network for long term memory was proposed by Pettigrew et al \cite{Pettigrew:2005ba} and investigated using bifurcation and singularity analysis by Song et al. \cite{Song:2006ku}. 
The model developed is of the form of Eq.~\ref{eq:ode} where $\boldsymbol{f}$ 
consists of a total of $15$ polynomial and rational functions,
$\boldsymbol{x} \in \bR^{15}$ is the vector of 15 model variables, 
and $\boldsymbol{p} \in \bR^{40}$ is a vector of 40 parameters. 

\begin{figure}[h!]
\includegraphics{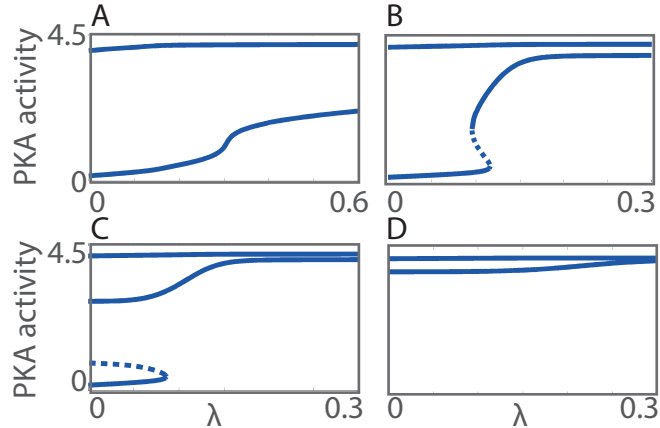}
\caption{Perturbation sweeping method applied to molecular network of long term memory. $\lambda$ is the bifurcation parameter. Solid lines denote a stable steady-state, dashed lines denote an unstable steady-state. {\bf(A)} Parameter $k_{34} = 0.0022.$ {\bf(B)} Parameter $k_{34} = 0.015.$ {\bf(C)} Parameter $k_{34} = 0.03.$ {\bf(D)} Parameter $k_{34} = 0.1.$
}
\label{fig-bifs}
\end{figure}

The full system $\boldsymbol{f}$ is given in Appendix~\ref{appendix:Song}
for which the equations $\boldsymbol{f}(\boldsymbol{x},\boldsymbol{p})=0$
has $432$ isolated nonsingular solutions for general $\boldsymbol{p}$.
The variable of interest required for long term facilitation is the steady-state of protein kinase A (PKA) in response to the extracellular stimulus parameter $\lambda$. Our aim is to demonstrate the sweeping perturbation method on a large model and reproduce the results in Fig.~5 of \cite{Song:2006ku}. We verify the two stable states and one unstable state, reproducing this figure. However we also find another solution not reported, demonstrating the power of this method (see top branch of Fig.~\ref{fig-bifs}). On inspection, this additional steady state is not on the same branch and is not biologically feasible so we can reject it as nonphysical. From this exhaustive first step of identifying and characterizing all the steady-states, one must exercise caution and systematically check each solution. 
%

\subsection{Cellular network model}
Most multicellular organisms emerge from a small number of stem-like cells which become increasingly specialized as they proliferate until they transition to one of a finite number of differentiated states \cite{clevers}.  
We propose a caricature model of cell fate specification for a ring of cells, and investigate how cell-cell interactions, mediated by diffusive exchange of a key growth factor, may affect the number of (stable) configurations or patterns that the differentiated cells may adopt.  The model serves as a good test case for discriminant locus methods since, by construction, there is an upper bound on the number of feasible steady states ($2^N$ solutions for a ring of $N$ cells) and some of these patterns are equivalent due to symmetries inherent in the governing equations. 
%

In more detail, we consider a ring of interacting cells ($i=1, \dots, N$) and denote by $x_i(t) \geq 0$ the concentration within cell $i$ of a growth factor or protein (e.g., notch), whose value
determines that cell's differentiation status \cite{fre,sprinzak,yeung,Visvader:2016fu,Grun:2015ev}. The subcellular dynamics of $x_i$ are represented by a phenomenological function $q(x_i)= - (x-\varepsilon) (a + x) (1 - x)$ with $0 < \varepsilon < a < 1$, this function guaranteeing bistability of each cell in the absence of cell-cell communication. The bistability represents two distinct cell fates; e.g., high and low levels of notch may be associated with differentiation of intestinal epithelial cells into secretory and absorptive phenotypes differentiation \cite{fre,sprinzak,clevers}.
We assume further that cell $i$ communicates with its nearest neighbors (cells $i\pm1$) via diffusive exchange of $x_i$ and denote by parameter $g \geq 0$ the coupling strength. Thus, our cell network model can be written 

\begin{equation}
\frac{d x_i}{d t} = q(x_i) + g \cdot \sum_{j= i-1}^{i+1} (x_j - x_i), 
\text{ for all $i=1, \dots, N$.}
\label{eq:cell-network}
\end{equation}
\begin{figure}[h!]
\includegraphics[width=\textwidth]{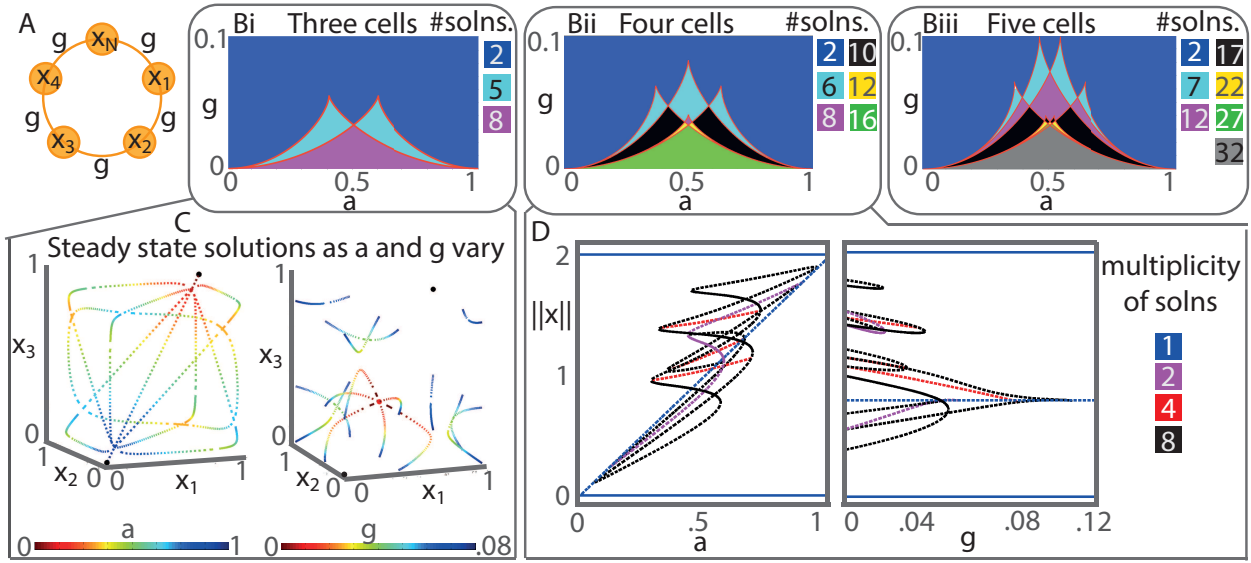}
\caption{
Global region decomposition applied to coupled cell network model. {\bf(A)} Ring of cell network. Each cell $x_i$ has bistable dynamics at stable states at $0$ and $1$, and unstable state given by $a$. The cells are coupled to neighbor cells by a coupling strength parameter~$g$. {\bf(B)} Region decomposition with parameters $a,g$, similar to a classification diagram, with the number of real stable steady-stages denoted by different colors. The number of stable real steady-states is given for each network where $N = 3,4,5$. {\bf(C)} Steady-state values in state space for $N = 3$ cell network are plotted as parameter $a$ is varied between $[\varepsilon,1]$ or $g$ is varied between 0 and 0.08. The two black dots are always a stable steady-states, the red is always an unstable steady-state, these steady-states are independent of parameter values $a$ and $g$. As the parameter value varies, the color changes according to the color bar.  {\bf(D)} Bifurcation diagram showing the 2-norm of $x$ as parameters are varied  
}
\label{fig-cells}
\end{figure}
We impose periodic boundary conditions so that $x_{N+1} \equiv x_1$ and $x_0 \equiv x_N$, as shown in Fig.~3A.
We analyze the model by applying the global decomposition method for $N=3,4,5$ cells, and construct classification diagrams in $(a,g)$ parameter space (Fig.~3B). We notice that intermediate values of $a$ generate the largest number of real stable steady-state; small and high values of $a$ yield fewer real stable steady-states. Interestingly, all cells synchronize for intermediate to strong values of the coupling parameter $g$ (two stable states in blue region in Fig~\ref{fig-cells}B). We conclude that strong cell-cell communication reduces the number of stable steady state configurations that a population of cells can adopt and, thus, cell-cell communication could be used robustly to drive the cells to a small number of specific states. When coupling is weak ($0 <g<0.1$), the interacting cells have more flexibility in terms of their final states, with the 5-cell network admitting up to 32 stable steady-states (Fig~\ref{fig-cells}B(iii)).  Weaker cell-cell communication allows more patterns to emerge and may be appropriate when it is less important that neighboring cells share the same phenotype. We find that the regions of $(a,g)$ parameter space that give rise to more than two (synchronized) steady states also increase in size as the number of cells increases. 

In addition to decomposing parameter space into regions based on the multiplicity of steady solutions, the method also provides valuable information about how solution structure and stability change as system parameters vary. For example, in Fig.~3C for the $N=3$ cell network, we show how the values and stability of the steady-states for $(x_1,x_2,x_3)$ change as $a$ varies with $g = 0.025$ and as $g$ varies with 
\mbox{$a = 0.4$}.  In Fig.~3D, we plot bifurcation diagrams as $a$ and $g$ vary 
as before for the $N=4$ cell network; instead of presenting particular components $x_i$ ($i=1,2,3,4$), we plot the 2-norm ($\|\boldsymbol{x}\| = (x_1^2 + x_2^2 + x_3^2 + x_4^2)^{1/2}$) to capture the multiplicity of solutions. We note that for $N=3$ and $N=4$ there are always two stable and one unstable steady-states, independent of $(a,g)$ parameter values (as shown by the black and red points in Fig.~3C, and by the solid blue lines in Fig.~3D). We demonstrate the local region decomposition method to analyze a generalization of Eq.~\ref{eq:cell-network} to be cell specific by considering different cell-cell couplings (see Appendix~\ref{appendix:cell}). Through this caricature stem-cell model, we have classified the steady-states, determined which are parameter-independent, explored changes in steady-state behavior as parameters are varied through the distinct regions of the parameter landscape. 




\section{Conclusion}

We have presented a suite of new methods, based on computational algebraic geometry, for decomposing the parameter space associated with a dynamical system into distinct regions based on the multiplicity and stability of its steady-state solutions. The methods enable us to understand the parameter landscape of high-dimensional, ordinary differential models with large numbers of parameters. 
These methods have considerable potential: they could be used to analyze differential equation models associated with a wide range of real-world problems in biology, science and engineering which cannot be tackled with~existing~approaches.

We have demonstrated that groups of cells, especially cells that can have bistable internal dynamics, when coupled, increase the number of real stable steady-states. First we have considered all the cells to be homogeneous, and even with this simple model, we can gain new insight into how stem cells may transit into fully differentiated cells, with much richer dynamics, just by considering the interaction of the cells as a network. We considered different couplings between cells using a local region decomposition in the Appendix.  We remark that it would be interesting to understand how a mutation in a single cell (for example, set $a$ very small) compares to real biological systems. In the future, it would be 
beneficial to consider different networks of 
cells (e.g., two growth factors, generalizations to 2D lattices) 
and the dynamical regimes these systems can admit. 

%

By embracing and exploiting the complexity naturally present in 
biology, which often translates to nonlinearity in biological models, there is an incredible opportunity to 
develop and apply numerical algebraic geometric methods to study 
such models.  With nonlinearity arising in all areas of science and engineering,
the increasing complexity of various biological models provides a fertile 
testing ground for nonlinear approaches with a long-term goal
of making numerical algebraic geometric methods  
as ubiquitous as methods in numerical linear algebra.

\section*{Acknowledgement}
We thank J. Byrne and G. Moroz for for helpful discussions.
JDH was supported in part by NSF ACI 1460032, Sloan Research Fellowship, and Army Young Investigator Program (YIP). HAH acknowledges funding from AMS Simons Travel Grant and EPSRC Fellowship EP/K041096/1.

\bibliographystyle{pnas}
\bibliography{bibliofile}
\appendix

\section{The Song et al. model}\label{appendix:Song}

For completeness, we reproduce below the system of 15 ordinary differential equations proposed by Song et al. \cite{Song:2006ku}. The parameters we vary are $\lambda=[5-HT]$ and $k_{34}$ = $k_{\mbox{ApSyn}}$. All other parameters are fixed to values given in \cite{Song:2006ku}.

{\scriptsize
\begin{eqnarray}
\frac{d[cAMP]}{dt} &=& 3.6 \left(\frac{[5-HT]}{[5-HT]+14}\right) - ([cAMP] - 0.06)\\
\frac{d[R]}{dt} &=& V_{\mbox{syn}} + k_{\mbox{fpka}}[RC] [cAPM]^2 - k_{\mbox{bpka}}[R][C]\nonumber \\
& & - k_{\mbox{Ap-Uch}}[R]([\mbox{Ap-Uch}]_{\tau} - \mbox{Ap-Uch}_{\mbox{basal}}) - k_{\mbox{dpka}}[R] \\
\frac{d [C]}{dt} &=& V_{\mbox{syn}} + k_{\mbox{fpka}}[RC] [cAPM]^2 - k_{\mbox{bpka}}[R][C]\nonumber \\
& &+ k_{\mbox{Ap-Uch}}[RC]([\mbox{Ap-Uch}]_{\tau} - \mbox{Ap-Uch}_{\mbox{basal}}) - k_{\mbox{dpka}}[C]\\
\frac{d [RC]}{d t} &=& k_{\mbox{bpka}} [R][C] - k_{fpka}[RC] [cAMP]^2 \nonumber \\
& & - k_{\mbox{Ap-Uch}}[RC]([\mbox{Ap-Uch}]_{\tau} - \mbox{Ap-Uch}_{\mbox{basal}}) - k_{\mbox{dpka}}[RC]\\
\frac{d [REG]}{d t} &=& k_{\mbox{translation}}[\mbox{mRNA}_{\mbox{REG}}] ERK_{\mbox{act}} \nonumber \\ 
& & - v_{\mbox{dreg}}\left(\frac{[REG]}{[REG] + K_{\mbox{dreg}}} \right) - k_{\mbox{dsm}}[REG] \\ 
\frac{d [pREG]}{dt} &=& v_{\mbox{rphos}}ERK_{\mbox{act}} \left(\frac{[REG] - [pREG]}{[REG]-[pREG] + K_{\mbox{rphos}}} \right) - v_{\mbox{dreg}} \left(\frac{[pREG]}{[REG] + K_{\mbox{dreg}}} \right) \nonumber \\
& & - k_{\mbox{dsm}}[\mbox{pREG}] \\
\frac{d [\mbox{mRNA}_{\mbox{REG}}]}{d t} &=& v_{\mbox{mREG}} - v_{\mbox{dmreg}}[5-HT] \left(\frac{[\mbox{mRNA}_{\mbox{REG}}]}{[\mbox{mRNA}_{\mbox{REG}}] + K_{\mbox{dmreg}}}\right) - k_{\mbox{dmreg}} [\mbox{mRNA}_{\mbox{REG}}] \\
\frac{d [\mbox{Raf}]}{d t} &=& - k_{f, \mbox{Raf}} [5-HT] [Raf] + k_{b,\mbox{Raf}} [\mbox{Raf}^{p}] \\
\frac{d [\mbox{MAPKK}]}{d t} &=& -k_{f, \mbox{MAPKK}}[\mbox{Raf}^{p}] \left(\frac{[\mbox{MAPKK}]}{[\mbox{MAPKK}] + K_{\mbox{MK}}} \right) + k_{b,\mbox{MAPKK}}\left(\frac{[\mbox{MAPKK}^{p}]}{[\mbox{MAPKK}^{p}] + K_{\mbox{MK}}} \right)\\
\frac{d[MAPKK]^{pp}}{d t} &=& k_{f,\mbox{MAPKK}}[\mbox{Raf}^{p}]\left(\frac{[\mbox{MAPKK}^{p}]}{[\mbox{MAPKK}^{p}] + K_{\mbox{MK}}} \right)
- k_{b,\mbox{MAPKK}}\left(\frac{[\mbox{MAPKK}^{pp}]}{[\mbox{MAPKK}^{pp}] + K_{\mbox{MK}}} \right) \\
\frac{d [\mbox{ERK}]}{d t} &=& - k_{f,\mbox{ERK}} [\mbox{MAPKK}^{pp}] \left( \frac{[\mbox{ERK}]}{[\mbox{ERK}] + K_{\mbox{MK}}} \right)
+ k_{b, \mbox{ERK}} \left( \frac{[\mbox{ERK}^{p}]}{[\mbox{ERK}^{p}] + K_{\mbox{MK}}}\right)\\
\frac{d [\mbox{ERK}^{pp}]}{d t} &=&  k_{f,\mbox{ERK}} [\mbox{MAPKK}^{pp}] \left( \frac{[\mbox{ERK}^{p}]}{[\mbox{ERK}^{p}] + K_{\mbox{MK}}} \right)
- k_{b, \mbox{ERK}} \left( \frac{[\mbox{ERK}^{pp}]}{[\mbox{ERK}^{pp}] + K_{\mbox{MK}}}\right) \\
\frac{d P_{\mbox{pka}}}{d t} &=& k_{\mbox{phos1}} \mbox{PKA}_{\mbox{act}} (1 - P_{\mbox{pka}}) - k_{\mbox{dephos1}} \mbox{PPhos} P_{\mbox{pka}}\\
\frac{d P_{\mbox{erk}}}{d t} &=& k_{\mbox{phos2}} \mbox{ERK}_{\mbox{act}} (1 - P_{\mbox{erk}}) - k_{\mbox{dephos2}} \mbox{PPhos} P_{\mbox{erk}}\\
\frac{d [\mbox{AP-Uch}]}{d t} &=& k_{\mbox{ApSyn}}\left(\frac{P^{2}_{\mbox{pka}}}{P^{2}_{\mbox{pka}} + K^{2}_{\mbox{pka}}} \right)
\left(\frac{P^{2}_{\mbox{erk}}}{P^{2}_{\mbox{erk}} + K^{2}_{\mbox{erk}}} \right) + k_{\mbox{ApSynBasal}} - k_{\mbox{deg}}[\mbox{Ap-Uch}]
\end{eqnarray}
}
where
{\scriptsize
\begin{eqnarray*}
 PKA_{\mbox{act}} &=& [C], \\ 
 \left[\mbox{Raf}^{p} \right] &=& [ \mbox{Raf}_{\mbox{tot}} ] - [ \mbox{Raf} ], \\
 k_{\mbox{bpka}}  &=& \left(\frac{k_{\mbox{baspka}}}{1+[pREG]/K_{\mbox{reg}}} \right), \\ 
 \left[\mbox{ERK}^{p}\right] &=& [\mbox{ERK}_{\mbox{tot}}] - [\mbox{ERK}] - [\mbox{ERK}^{pp}],  \\
 \mbox{ERK}_{\mbox{act}} &=& [\mbox{ERK}^{pp}] + \mbox{ERK}_{\mbox{basal}}, \\ 
\left[\mbox{MAPKK}^{p}\right] &=& [\mbox{MAPKK}_{\mbox{tot}}] - [\mbox{MAPKK}] - [\mbox{MAPKK}^{pp}]. 
\end{eqnarray*}
}
Following \cite{Song:2006ku}, we fix the model parameters at the following values:
{\scriptsize
\begin{eqnarray*}
 V_{\mbox{syn}} &=& 0.002 \mbox{min}^{-1},
 k_{\mbox{fpka}} = 105.0 \mbox{min}^{-1},\nonumber \\
 k_{\mbox{Ap-Uch}} &=& 0.007 \mbox{min}^{-1}, 
 \mbox{Ap-Uch}_{\mbox{basal}} = 0.10, \nonumber \\
 k_{\mbox{dpka}} &=& 0.00048 \mbox{min}^{-1}, 
 \tau = 250 \mbox{min}, \nonumber \\
 k_{\mbox{baspka}} &=& 12 \mbox{min}^{-1}, 
 k_{\mbox{reg}} = 0.00064,\nonumber \\
 k_{\mbox{translation}} &=& 4 \mbox{min}^{-1},
 v_{\mbox{rphos}} = 1 \mbox{min}^{-1}, K_{\mbox{rphos}} = 1.5, \nonumber \\
 v_{\mbox{dreg}} &=& 0.16 \mbox{min}^{-1}, 
 K_{\mbox{dreg}} = 0.0015, \nonumber \\
 k_{\mbox{dsm}} &=& 0.02 \mbox{min}^{-1}, 
 v_{\mbox{mREG}} = 2\time 10^{-5} \mbox{min}^{-1}, \nonumber \\
 v_{\mbox{dmreg}} &=& 0.00225 \mbox{min}^{-1}, 
 K_{\mbox{dmreg}} = 0.01,\nonumber \\
 k_{\mbox{dmreg}} &=& 3 \times 10^{-5} \mbox{min}^{-1}, 
 k_{f, \mbox{Raf}} = 0.00345 \mbox{min}^{-1}, \nonumber \\
 k_{f, \mbox{MAPKK}} &=& 0.7 \mbox{min}^{-1}, 
 k_{f, \mbox{ERK}} = 0.44 \mbox{min}^{-1}, \nonumber \\
 k_{b, \mbox{Raf}} &=& 0.001 \mbox{min}^{-1}, 
 k_{b, \mbox{MAPKK}} = 0.12 \mbox{min}^{-1}, \nonumber \\
 k_{b, \mbox{ERK}} &=& 0.12 \mbox{min}^{-1},
  [\mbox{Raf}_{\mbox{tot}}] = 0.5, \nonumber \\
  \mbox{MAPKK}_{\mbox{tot}} &=& 0.5, 
[ERK_{\mbox{tot}}] = 0.5 \nonumber \\
K_{\mbox{MK}} &=& 0.08, 
\mbox{ERK}_{\mbox{basal}} = 0.015, \nonumber \\
k_{\mbox{phos1}} &=& 0.01 \mbox{min}^{-1}, 
k_{\mbox{dephos1}} = 1.5 \mbox{min}^{-1}, \nonumber \\
\mbox{PPhos} &=& 0.1, 
k_{\mbox{phos2}} = 0.005 \mbox{min}^{-1},\nonumber \\
k_{\mbox{ApSyn}} &=& 0.02 \mbox{min}^{-1}, 
k_{\mbox{ApSynBasal}} = 0.0009 \mbox{min}^{-1}, \nonumber \\
K_{\mbox{pka}} &=& 0.2,
K_{\mbox{erk}} = 0.004, \nonumber \\
k_{\mbox{deg}} &=& 0.01 \mbox{min}^{-1}
\end{eqnarray*}
}
We remark that the discriminant method can handle rational functions.
For this example, the denominators do not vanish near the regions of 
interest so they do not have any impact on the behavior of the solutions.
If the denominator also vanished when finding a solution to the system of equations from the numerators, then the parameter values for which this occurs would be added into the discriminant~locus.

\section{Cell network model}\label{appendix:cell}

The model in Eq.~\ref{eq:cell-network} used uniform coupling $g$ for
all nearest neighbors.  Using \cite{HS10}, 
as a set, the classical discriminant locus for this model 
with $N = 4$ has degree $72$.
That is, there is a degree $72$ polynomial $\Delta(a,g)$ such
that the classical discriminant locus is defined by $\Delta = 0$.
With $a = 0.4$ as in Fig~3D, the univariate polynomial equation
$\Delta(0.4,g) = 0$ has $45$ complex solutions, $25$ of which are real with $15$ positive with only~$4$ corresponding to a change in the 
number of stable steady-state solutions yielding the 
regions (intervals) for $g$ approximately:
$${\small
\begin{array}{r|ccccc} 
\mbox{regions} & [0,0.0197) & (0.0197,0.0206) & (0.0206,0.0411) & (0.0411,0.0533) & (0.0533,\infty) \\
\hline 
\mbox{\# stable steady-state solns} & 16 & 12 & 10 & 6 & 2 
\end{array}
}$$

We now consider a generalization of this model
which uses coupling strengths $g_{i,i+1}=g_{i+1,i}\geq0$ 
between cell $i$ and $i+1$ with cyclic 
ordering ($N+1\equiv 1$), namely
\begin{equation}
\frac{d x_i}{d t} = q(x_i) + \sum_{j= i-1}^{i+1} g_{i,j} \cdot (x_j - x_i), 
\text{ for all $i=1, \dots, N$.}
\label{eq:cell-network2}
\end{equation}
With $N = 4$, as a set, the classical discriminant locus for this generalized 
model has degree $486$.  We consider the case with $a = 0.4$ and 
$g_{4,1} = g_{1,4} = 0$ leaving three free 
parameters $g_{1,2}$, $g_{2,3}$, and $g_{3,4}$.  

The perturbed sweeping approach along the ray defined
by $g_{i,i+1} = i\cdot t$ for $i = 1,2,3$ and $t\geq0$
decomposes the space $t\geq0$ into $15$ regions, approximately
$$
\begin{array}{c|c} 
\mbox{regions} & \mbox{\# stable steady-state solns} \\
\hline
[0,0.0079) & 16 \\
(0.0079,0.0080)& 15 \\
(0.0080,0.0132)& 14 \\
(0.0132,0.0136)& 13 \\
(0.0136,0.0137) & 12 \\
(0.0137,0.0142)& 11 \\
(0.0142,0.0153)& 10 \\
(0.0153,0.0161)& 9 \\
(0.0161,0.0171)& 8 \\
(0.0171,0.0237)& 7 \\
(0.0237,0.0352& 6 \\
(0.0352,0.0360)& 5 \\
(0.0360,0.0407)& 4 \\
(0.0407,0.1264)& 3 \\
(0.1264,\infty)& 2 
\end{array}
$$
As a comparison, the classical discriminant with respect to $t\in\bC$ 
consists of $312$ distinct points of which $84$ are real with $42$ positive
with $14$ corresponding to a change in the number stable steady-state solutions.
Figure~\ref{fig-sweep} plots the region with $16$ 
stable steady-state solutions along various rays emanating from the origin.

\begin{figure}[h!]
\includegraphics[scale=0.4]{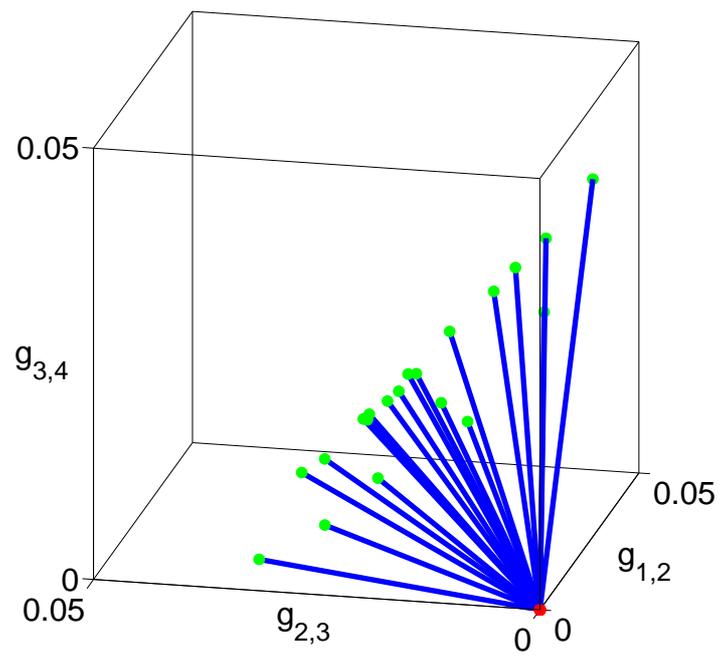}
\caption{Plot of the regions with $16$ stable steady-state solutions
along various rays emanating from $g_{1,2} = g_{2,3} = g_{3,4} = 0$.}
\label{fig-sweep}
\end{figure}

\end{document}